\newcounter{statement}
\theoremstyle{plain}
\newtheorem{lemma}[statement]{Lemma}
\newtheorem{theorem}[statement]{Theorem}
\def\1{{\boldsymbol 1}}       %
\def\0{{\boldsymbol 0}}       %
\def\bP{{\boldsymbol P}}      %
\def\bQ{{\boldsymbol Q}}      %
\def\bR{{\boldsymbol R}}      %
\def\bU{{\boldsymbol U}}      %
\def\bV{{\boldsymbol V}}      %
\def\bW{{\boldsymbol W}}      %
\def\bX{{\boldsymbol X}}      %
\def\cG{{\mathcal G}}         %
\def\cO{{\mathcal O}}         %
\def\cR{{\mathcal R}}         %
\def\red{{\rm red}}           %
\def\tr{{\rm tr}}             %
\def\diag{{\rm diag}}         %
\def\rd{{\rm d}}              %
\def\ri{{\rm i}}              %
\def\Re{{\rm Re}}             %
\def\C{\mathbb{C}}            %
\def\N{\mathbb{N}}            %
\def\R{\mathbb{R}}            %
\def\T{{\mathbb T}}           %
\def\UN{{\rm U}}              %
\def\fL{\mathfrak{L}}         %
\begin{document}
\begin{center}
{\large\bf
On the derivation of Darboux form for the action-angle
dual of trigonometric BC${}_\textit{n}$ Sutherland system}
\end{center}

\vspace{0.2cm}

\begin{center}
T.F.~G{\"o}rbe\\

\bigskip

Department of Theoretical Physics, University of Szeged\\
Tisza Lajos krt 84-86, H-6720 Szeged, Hungary\\
e-mail: tfgorbe@physx.u-szeged.hu

\bigskip
\end{center}

\vspace{0.2cm}

\begin{abstract}
Recently Feh\'er and the author have constructed the action-angle
dual of the trigonometric BC${}_n$ Sutherland system via Hamiltonian
reduction. In this paper\footnote{Contribution to the proceedings
of the 22nd International Conference on ``Integrable Systems and Quantum
Symmetries'' (ISQS-22, Prague, June 2014).} a reduction-based
calculation is carried out to verify canonical Poisson bracket
relations on the phase space of this dual model. Hence the material
serves complementary purposes whilst it can also be regarded as a
suitable modification of the hyperbolic case previously sorted out by Pusztai.
\end{abstract}

\medskip
{\bf Keywords:} {\em Integrable systems}; {\em Hamiltonian reduction};
{\em Darboux form}

\medskip
{\bf MSC2010:} 14H70; 37J15; 53D20

\medskip
{\bf PACS number:} 02.30.Ik

\newpage

\section{Introduction}
\label{sec:1}

The integrable one-dimensional many-body systems
of Calogero, Moser, and Sutherland and generalized versions of them have proven to be a
fruitful source of both diverse physical applications and connections between seemingly
distant areas of mathematics. For details, see e.g.
\cite{RUIJSENAARS1,ETINGOF,POLYCHRONAKOS}.
Among the numerous aspects of these models their duality relations are rather
interesting. Two Liouville integrable many-body Hamiltonian systems $(M,\omega,H)$ and
$(\tilde M,\tilde\omega,\tilde H)$ with Darboux coordinates $q,p$ and $\lambda,\vartheta$,
respectively, are said to be duals of each other if there is a global symplectomorphism
$\cR\colon M\to\tilde M$ of the phase spaces, which exchanges the canonical coordinates
with the action-angle variables for the Hamiltonians. Practically, this means that
$H\circ\cR^{-1}$ depends only on $\lambda$, while $\tilde H\circ\cR$ only on $q$.
In more detail, $q$ are the particle positions for $H$ and action variables for
$\tilde H$, and similarly, $\lambda$ are the positions of particles modelled by the
Hamiltonian $\tilde H$ and action variables for $H$.

A notable work has been done by Ruijsenaars \cite{RUIJSENAARS2,RUIJSENAARS3}
in constructing action-angle duality maps for models with rational, hyperbolic, and
trigonometric potentials associated with the A${}_n$ root system. Many of these
dualities have been interpreted in terms of Hamiltonian reduction
\cite{FOCK-GORSKY-NEKRASOV-RUBTSOV,FEHER-KLIMCIK}.

The suspected existence of action-angle duality between models related other root systems
has been confirmed by Pusztai \cite{PUSZTAI2} proving the hyperbolic BC${}_n$ Sutherland
\cite{OLSHANETSKY-PERELOMOV} and the rational BC${}_n$ Ruijsenaars\,--\,Schneider\,--\,van
Diejen (RSvD) \cite{VANDIEJEN} systems to be in
duality.

In a recent paper by Feh\'er and the author \cite{FEHER-GORBE} earlier results
\cite{PUSZTAI2,FEHER-AYADI} have been generalized to obtain a new dual pair involving
the trigonometric BC${}_n$ Sutherland system. This was achieved by applying Hamiltonian
reduction to the cotangent bundle $T^\ast\UN(2n)$ with respect to the symmetry group
$G_+\times G_+$ with $G_+\simeq\UN(n)\times\UN(n)$. The systems in duality arose as two
cross sections of the orbits of the symmetry group in the level surface of the momentum
map since these cross sections were identified with the phase spaces of the trigonometric
BC${}_n$ Sutherland and a rational BC${}_n$ RSvD-type systems. The aim of this paper is to
provide detailed calculations proving that under this identification the coordinates
$\lambda,\vartheta$ -- introduced on a dense submanifold of the phase space of the dual
model -- are canonical (Darboux) coordinates as stated in \cite{FEHER-GORBE}.

Section \ref{sec:2} is a selective review of \cite{FEHER-GORBE} devoted to establishing
context and introducing necessary notations for succeeding calculations. The core of
the paper is Section \ref{sec:3} which contains a series of lemmas culminating in the
main result. Concluding the paper, Section \ref{sec:4} gives a brief discussion of the
outcome and its relation to other cases considered formerly.

\section{Context and notations}
\label{sec:2}

Choose an arbitrary positive integer, $n$. Let $G$ and $\cG$ denote the unitary group
U($2n$) and its Lie algebra, respectively. The Lie algebra $\cG$ can be equipped with
the Ad-invariant bilinear form
\begin{equation}
\langle\cdot,\cdot\rangle\colon\cG\times\cG\to\R,\quad
(Y_1,Y_2)\mapsto\langle Y_1,Y_2\rangle=\tr(Y_1Y_2),
\label{1}
\end{equation}
which allows one to identify $\cG$ with its dual space $\cG^\ast$ in the usual manner.
The cotangent bundle $T^\ast G$ can be trivialized using left-translations
\begin{equation}
T^\ast G\cong G\times\cG^\ast\cong G\times\cG
=\{(y,Y)\mid y\in G,\ Y\in\cG\}.
\label{2}
\end{equation}
Then the canonical symplectic form of $T^\ast G$ can be written as
$\Omega^{T^*G}=-\rd\langle y^{-1}\rd y,Y\rangle$, and it can be evaluated locally
according to the formula
\begin{equation}
\Omega^{T^\ast G}_{(y,Y)}(\Delta y\oplus\Delta Y,\Delta'y\oplus\Delta'Y)
=\langle y^{-1}\Delta y,\Delta'Y\rangle
-\langle y^{-1}\Delta'y,\Delta Y\rangle
+\langle[y^{-1}\Delta y,y^{-1}\Delta'y],Y\rangle,
\label{3}
\end{equation}
where $\Delta y\oplus\Delta Y,\Delta'y\oplus\Delta'Y\in T_{(y,Y)} T^*G$
are arbitrary tangent vectors at a point $(y,Y)\in T^*G$. By introducing the $2n\times 2n$
Hermitian, unitary matrix
\begin{equation}
C=\begin{bmatrix}
\0_n&\1_n\\
\1_n&\0_n
\end{bmatrix}\in G,
\label{4}
\end{equation}
where $\1_n$ and $\0_n$ denote the identity and null matrices of size $n$, respectively,
an involutive automorphism of $G$ can be defined as conjugation with $C$
\begin{equation}
\Gamma\colon G\to G,\quad
y\mapsto\Gamma(y)=CyC^{-1}.
\label{5}
\end{equation}
The fix-point subgroup of $\Gamma$ in $G$ is
\begin{equation}
G_+=\{y\in G\mid\Gamma(y)=y\}\cong\UN(n)\times\UN(n).
\label{6}
\end{equation}
Let $\Gamma$ stand for the induced involution of the Lie algebra $\cG$, too.
Hence $\cG$ can be decomposed as
\begin{equation}
\cG=\cG_+\oplus\cG_-,\quad Y=Y_++Y_-,
\label{7}
\end{equation}
where $\cG_\pm$ are the eigenspaces of $\Gamma$
corresponding to the eigenvalues $\pm 1$, respectively.

In \cite{FEHER-GORBE} a reduction of $T^\ast G$ based on the symmetry group
$G_+ \times G_+$ was performed by using the shifting trick of symplectic reduction
\cite{ORTEGA-RATIU}. For that a coadjoint orbit of the symmetry group must be prepared.
To any vector $V\in\C^{2n}$ that satisfies $CV+V=0$ associate an element
$\upsilon_{\mu,\nu}^\ell(V)$ of $\cG_+$ by the definition
\begin{equation}
\upsilon_{\mu,\nu}^\ell(V)=\ri\mu\big(VV^\dag-\1_{2n}\big)+\ri(\mu-\nu)C,
\label{8}
\end{equation}
where $\mu,\nu\in\R$ are real parameters.
The set
\begin{equation}
\cO^\ell=
\big\{\upsilon^\ell\in\cG_+\mid
\exists\ V\in\C^{2n},\ V^\dag V=2n,\ CV+V=0,\
\upsilon^\ell=\upsilon_{\mu,\nu}^\ell(V)
\big\}
\label{9}
\end{equation}
represents a coadjoint orbit of $G_+$ of dimension $2(n-1)$.
Let $\cO^r:=\{\upsilon^r\}$ denote the one-point coadjoint orbit of $G_+$
containing the element $\upsilon^r=-\ri\kappa C$ with some constant $\kappa\in\R$ and
consider
\begin{equation}
\cO=\cO^\ell\oplus\cO^r\subset\cG_+\oplus\cG_+\cong(\cG_+\oplus\cG_+)^\ast,
\label{10}
\end{equation}
which is a coadjoint orbit of $G_+ \times G_+$.
The initial phase space for symplectic reduction is
\begin{equation}
P=T^\ast G\times\cO
\quad\text{with the symplectic form}\quad
\Omega=\Omega^{T^\ast G}+\Omega^{\cO},
\label{11}
\end{equation}
where $\Omega^{\cO}$ is the Kirillov\,--\,Kostant\,--\,Souriau symplectic form on the
coadjoint orbit $\cO$.

\noindent For any point $x=(y,Y,\upsilon^\ell,\upsilon^r)\in P$ and smooth functions
$f,f'\in C^\infty(P)$
\begin{equation}
\Omega_x((\bX_f)_x,(\bX_{f'})_x)
=\Omega^{T^*G}_{(y,Y)}(\Delta y\oplus\Delta Y,\Delta'y\oplus\Delta'Y)
+\langle[D_{\upsilon^\ell},D'_{\upsilon^\ell}],\upsilon^\ell\rangle,
\label{12}
\end{equation}
where $(\bX_f)_x=\Delta y\oplus\Delta Y\oplus\Delta\upsilon^\ell\oplus 0$,
$(\bX_{f'})_x=\Delta'y\oplus\Delta'Y\oplus\Delta'\upsilon^\ell\oplus 0\in T_x P$
and $\Delta\upsilon^\ell=[D_{\upsilon^\ell},\upsilon^\ell]$,
$\Delta'\upsilon^\ell=[D'_{\upsilon^\ell},\upsilon^\ell]$ with some $\cG_+$-valued
$D_{\upsilon^\ell}^{\phantom{'}},D'_{\upsilon^\ell}$. The natural symplectic action of
$G_+\times G_+$ on $P$ is defined by
\begin{equation}
\Phi_{(g_L,g_R)}(y,Y,\upsilon^\ell,\upsilon^r)=
\big(g_L^{\phantom{1}}yg_R^{-1},
     g_R^{\phantom{1}}Yg_R^{-1},
     g_L^{\phantom{1}}\upsilon^\ell g_L^{-1},
     \upsilon^r \big).
\label{13}
\end{equation}
The corresponding  momentum map $J\colon P\to\cG_+\oplus\cG_+$ is given by the formula
\begin{equation}
J(y,Y,\upsilon^\ell,\upsilon^r)=
\big((yYy^{-1})_++\upsilon^\ell\big)\oplus\big(-Y_++\upsilon^r\big).
\label{14}
\end{equation}
The reduced phase space is
\begin{equation}
P_\red=J^{-1}(0)/(G_+\times G_+),
\label{15}
\end{equation}
which is a smooth symplectic manifold.

One of the main results in \cite{FEHER-GORBE} was the construction of a semi-global
cross section of symmetry group orbits in the momentum constraint surface $J^{-1}(0)$,
that is a model of the reduced phase space \eqref{15}. This was done by solving the
momentum equation $J(y,Y,\upsilon^\ell,\upsilon^r)=\0_{2n}\oplus\0_{2n}$ through the
diagonalization of the ($\cG_-$)-part of the Lie algebra component. In particular,
the following matrix similarity was demonstrated
\begin{equation}
Y\sim\ri h(\lambda)\Lambda(\lambda)h(\lambda)^{-1},
\label{16}
\end{equation}
where $\Lambda(\lambda)=\diag(\lambda,-\lambda)$ with
$\lambda=(\lambda_1,\ldots,\lambda_n)\in\R^n$ subject to
$\lambda_1>\cdots>\lambda_n>|\kappa|$ and $h(\lambda)$ is the unitary matrix
\begin{equation}
h(\lambda)=\begin{bmatrix}
\alpha(\diag(\lambda))&\beta(\diag(\lambda))\\
-\beta(\diag(\lambda))&\alpha(\diag(\lambda))
\end{bmatrix},
\label{17}
\end{equation}
with the real functions $\alpha(x),\beta(x)$ defined on the interval
$[|\kappa|,\infty)\subset\R$ by the formulae
\begin{equation}
\alpha(x)=\frac{\sqrt{x+\sqrt{x^2-\kappa^2}}}{\sqrt{2x}},\quad
\beta(x)=\kappa\frac{1}{\sqrt{2x}}\frac{1}{\sqrt{x+\sqrt{x^2-\kappa^2}}},
\label{18}
\end{equation}
if $\kappa \neq 0$. For $\kappa=0$, set $h(\lambda)=\1_{2n}$.
This approach enables one to define the smooth map
\begin{equation}
\fL\colon P_0\to \R^n,\quad
(y,Y,\upsilon^\ell,\upsilon^r)\mapsto\lambda,
\label{19}
\end{equation}
which descends to a smooth map $\fL_\red\colon P_\red\to\R^n$.
The image of the constraint surface $J^{-1}(0)$ under the map $\fL$ \eqref{19}
turned out to be the closure of the domain
\begin{equation}
C_2=\bigg\{\lambda\in\R^n\bigg|
\begin{matrix}\lambda_a-\lambda_{a+1}>2\mu,\\
(a=1,\ldots,n-1)\end{matrix}
\quad\text{and}\quad
\lambda_n>\nu \bigg\}.
\label{20}
\end{equation}
Introduce the vector $F\in\C^{2n}$ by the formulae
\begin{equation}
\begin{split}
&F_{a}=\bigg[1-\frac{\nu}{\lambda_a}\bigg]^{\tfrac{1}{2}}
\prod_{\substack{b=1\\(b\neq a)}}^n
\bigg[1-\frac{2\mu}{\lambda_a-\lambda_b}\bigg]^{\tfrac{1}{2}}
\bigg[1-\frac{2\mu}{\lambda_a+\lambda_b}\bigg]^{\tfrac{1}{2}}, \quad a\in\{1,\ldots,n\},\\
&F_{n+a}=e^{\ri\vartheta_a}\bigg[1+\frac{\nu}{\lambda_a}\bigg]^{\tfrac{1}{2}}
\prod_{\substack{b=1\\(b\neq a)}}^n
\bigg[1+\frac{2\mu}{\lambda_a-\lambda_b}\bigg]^{\tfrac{1}{2}}
\bigg[1+\frac{2\mu}{\lambda_a+\lambda_b}\bigg]^{\tfrac{1}{2}}.
\end{split}
\label{21}
\end{equation}
and the $2n\times 2n$ matrices $A(\lambda, \vartheta)$ and $B(\lambda, \vartheta)$ by
\begin{equation}
A_{j,k}(\lambda,\vartheta)=\frac{2\mu F_j\overline{(CF)}_k-
2(\mu-\nu)C_{j,k}}{2\mu-\Lambda_j+\Lambda_k},\quad
j,k\in\{1,\ldots,2n\},
\label{22}
\end{equation}
and
\begin{equation}
B(\lambda,\vartheta)=-\big(h(\lambda)A(\lambda,\vartheta)h(\lambda)\big)^\dagger.
\label{23}
\end{equation}
These are unitary matrices satisfying $\Gamma(A)=A^{-1}$,
$\Gamma(B)=B^{-1}$. The matrix $B$ can be diagonalized using some $\eta\in G_+$
\begin{equation}
B=\eta\,\diag(\exp(2\ri q),\exp(-2\ri q))\eta^{-1},
\label{24}
\end{equation}
where $q=q(\lambda,\vartheta)\in\R^n$ is unique and subject to $\pi/2>q_1>\cdots>q_n>0$.
Relying on \eqref{24} set
\begin{equation}
y(\lambda,\vartheta)=\eta\,\diag(\exp(\ri q),\exp(-\ri q))\eta^{-1},
\label{25}
\end{equation}
and introduce the vector $V(\lambda,\vartheta)\in\C^{2n}$ by
\begin{equation}
V(\lambda,\vartheta)=y(\lambda,\vartheta)h(\lambda)F(\lambda,\vartheta).
\label{26}
\end{equation}
It was also shown in \cite{FEHER-GORBE} that $V+CV=0$ and $|V|^2=2n$ ensuring that 
$\upsilon^\ell_{\mu,\nu}(V)\in\cO^\ell$ \eqref{9}.

Theorem 4.1 of \cite{FEHER-GORBE} claims that the set
\begin{equation}
\tilde S^0:=\{(y(\lambda,\vartheta),\ri h(\lambda)\Lambda(\lambda)h(\lambda)^{-1},
\upsilon^\ell_{\mu,\nu}(V(\lambda,\vartheta)),
\upsilon^r)\mid(\lambda,e^{\ri \vartheta})\in
C_2\times\T^n\}.
\label{27}
\end{equation}
is contained in the constraint surface $J^{-1}(0)$ and provides a cross-section for the
$G_+\times G_+$-action restricted to $\fL^{-1}(C_2)\subset J^{-1}(0)$. In particular,
$C_2\subset\fL(J^{-1}(0))$ and $\tilde S^0$ intersects every gauge orbit in
$\fL^{-1}(C_2)$ precisely in one point. Since the elements of $\tilde S^0$ are
parametrized by $C_2\times\T^n$ in a smooth and bijective manner, the following
identifications were gained
\begin{equation}
\fL^{-1}_\red(C_2)\simeq\tilde S^0\simeq C_2\times\T^n.
\label{28}
\end{equation}
Let $\tilde \sigma_0$ denote the tautological injection
\begin{equation}
\tilde\sigma_0\colon\tilde S^0\to P.
\label{29}
\end{equation}
This way $C_2 \times \T^n$ yields a model of an open submanifold $\fL^{-1}(C_2)$ of
$P_\red$ corresponding to the open submanifold $\fL^{-1}(C_2) \subset J^{-1}(0)$ was
obtained. The purpose of this paper is to show that the pull-back
$\tilde\sigma_0^\ast(\Omega)$ of the symplectic form $\Omega$ \eqref{11} is
\begin{equation}
\tilde\sigma_0^\ast(\Omega)=\sum_{a=1}^n\rd\lambda_a\wedge\rd\vartheta_a
\label{30}
\end{equation}
by computing the Poisson brackets
\begin{equation}
\{\lambda_a,\lambda_b\},\quad
\{\lambda_a,\vartheta_b\},\quad
\{\vartheta_a,\vartheta_b\},\qquad
a,b\in\{1,\ldots,n\}.
\label{31}
\end{equation}
Now, consider the reduced functions $f_j^\red=\tilde\sigma_0^\ast(f_j)$ for some
$f_j\in C^\infty(P)^{G_+\times G_+}$ ($j=1,2$). Then the definition of symplectic
reduction implies
\begin{equation}
\tilde\sigma_0^\ast(\{f_1,f_2\})=\{f_1^\red,f_2^\red\},
\label{32}
\end{equation}
where the Poisson bracket on the left-hand-side is computed on $(P,\Omega)$ \eqref{11}.
The idea is to extract the required Poisson brackets in \eqref{31} from equality \eqref{32}
applied to various choices of $f_1,f_2$.
Note that $\{f_1,f_2\}=\Omega(\bX_{f_2},\bX_{f_1})$
with the corresponding Hamiltonian vector fields.

\section{Calculation of Poisson brackets}
\label{sec:3}

The following verification is an appropriate adaptation of an argument presented by
Pusztai in \cite{PUSZTAI1} which since has been applied in the simpler case of A${}_n$
root system in \cite{AYADI-FEHER-GORBE}. Differences between these earlier results
and the calculations below are highlighted in the Discussion.

Consider the following families of real-valued smooth functions on the phase space $P$
\eqref{11}
\begin{eqnarray}
\varphi_m(y,Y,\upsilon^\ell,\upsilon^r):=\dfrac{1}{m}\Re\big(\tr(Y^m)\big),
\quad m\in\N,\label{33}\\
\chi_k(y,Y,\upsilon^\ell,\upsilon^r):=\Re\big(\tr(Y^ky^{-1}Z(\upsilon^\ell)yC)\big),
\quad k\in\N_0,\label{34}
\end{eqnarray}
where $Z(\upsilon^\ell)=(\ri\mu)^{-1}\upsilon_{\mu,\nu}^\ell(V)+\1_N-(1-\nu/\mu)C=VV^\dag$.
The corresponding reduced functions on $\tilde S^0$ are
\begin{equation}
\varphi_m^\red(\lambda,\vartheta)=
\begin{cases}
0,&\text{if}\ m\ \text{is odd},\\
\displaystyle(-1)^{\tfrac{m}{2}}\frac{2}{m}\sum_{j=1}^n\lambda_j^m,
&\text{if}\ m\ \text{is even},
\end{cases}
\label{35}
\end{equation}
and
\begin{equation}
\chi_k^\red(\lambda,\vartheta)=
\begin{cases}
\displaystyle(-1)^{\tfrac{k+1}{2}} 2\sum_{a=1}^n\lambda_a^k
\bigg[1-\frac{\kappa^2}{\lambda_a^2}\bigg]^{\tfrac{1}{2}}
|X_a|\sin(\vartheta_a),&\mbox{if }k\mbox{ is odd},\\
\displaystyle(-1)^{\tfrac{k}{2}} 2\sum_{a=1}^n\lambda_a^k
\bigg[1-\frac{\kappa^2}{\lambda_a^2}\bigg]^{\tfrac{1}{2}}
|X_a|\cos(\vartheta_a)
-\kappa\lambda_a^{k-1}\big(|F_a|^2-|F_{n+a}|^2\big),&\mbox{if }k\mbox{ is even},
\end{cases}
\label{36}
\end{equation}
where
\begin{equation}
X_a=F_a\overline{F}_{n+a}
=e^{-\ri\vartheta_a}\bigg[1-\frac{\nu^2}{\lambda_a^2}\bigg]^{\tfrac{1}{2}}
\prod_{\substack{b=1\\(b\neq a)}}^n
\bigg[1-\frac{4\mu^2}{(\lambda_a-\lambda_b)^2}\bigg]^{\tfrac{1}{2}}
\bigg[1-\frac{4\mu^2}{(\lambda_a+\lambda_b)^2}\bigg]^{\tfrac{1}{2}}.
\label{37}
\end{equation}
Now let us take an arbitrary point $x=(y,Y,\upsilon^\ell,\upsilon^r)\in P$ and an arbitrary
tangent vector $\delta x=\delta y\oplus\delta Y\oplus\delta\upsilon^\ell\oplus 0\in T_x P$.
The derivative of $\varphi_m$ can be easily obtained and has the form
\begin{equation}
(\rd\varphi_m)_x(\delta x)=
\begin{cases}
0,&\mbox{if }m\mbox{ is odd},\\
\langle Y^{m-1},\delta Y\rangle,&\mbox{if }m\mbox{ is even}.
\end{cases}
\label{38}
\end{equation}
The derivative of $\chi_k$ can be written as
\begin{equation}
\begin{split}
(\rd\chi_k)_x(\delta x)=&
\bigg\langle\dfrac{\big[[Y^k,C]_\pm,y^{-1}Z(\upsilon^\ell)y\big]}{2},y^{-1}
\delta y\bigg\rangle\\
&+\bigg\langle\sum_{j=0}^{k-1}\dfrac{Y^{k-j-1}[y^{-1}Z(\upsilon^\ell)y,C]_\pm
Y^j}{2},\delta Y\bigg\rangle\\
&+\bigg\langle\dfrac{y[C,Y^k]_\pm y^{-1}+Cy[C,Y^k]_\pm
y^{-1}C}{4\ri\mu},\delta\upsilon^\ell\bigg\rangle,
\end{split}
\label{39}
\end{equation}
where $[A,B]_\pm:=AB\pm BA$ with the sign of $(-1)^k$.
The Hamiltonian vector field of $\varphi_m$ is
\begin{equation}
(\bX_{\varphi_m})_x
=\Delta y\oplus\Delta Y\oplus\Delta\upsilon^\ell\oplus 0
=yY^{m-1}\oplus 0\oplus 0\oplus 0,
\label{40}
\end{equation}
while the Hamiltonian vector field corresponding to $\chi_k$ is
\begin{equation}
(\bX_{\chi_k})_x
=\Delta' y\oplus\Delta' Y\oplus\Delta' \upsilon^\ell\oplus 0,
\label{41}
\end{equation}
where
\begin{alignat}{3}
\Delta'y&=\dfrac{y}{2}\sum_{j=0}^{k-1}Y^{k-j-1}[y^{-1}Z(\upsilon^\ell)y,C]_\pm Y^j,
\label{42}\\
\Delta'Y&=\dfrac{1}{2}\big[[Y^k,y^{-1}Z(\upsilon^\ell)y]_\pm,C\big],
\label{43}\\
\Delta'\upsilon^\ell&=\dfrac{1}{4\ri\mu}\big[\big(y[C,Y^k]_\pm y^{-1}+Cy[C,Y^k]_\pm
y^{-1}C\big),\upsilon^\ell\big].
\label{44}
\end{alignat}

\begin{lemma}
$\{\lambda_a,\lambda_b\}=0$ for any $a,b\in\{1,\ldots,n\}$.
\label{lemma:1}
\end{lemma}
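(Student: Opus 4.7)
The strategy is to exploit the invariant trace functions $\varphi_m$ of \eqref{33}, whose reductions by \eqref{35} depend on $\lambda$ alone. By \eqref{32}, every Poisson bracket identity satisfied by the family $\{\varphi_m\}_m$ on $P$ translates into a linear constraint on the $n\times n$ antisymmetric matrix $\{\lambda_a,\lambda_b\}$ on $P_\red$, and one only needs enough independent identities to force this matrix to vanish.

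The first step is to establish that $\{\varphi_m,\varphi_{m'}\}\equiv 0$ on the unreduced phase space $P$. For odd $m$ the claim is trivial, since $\rd\varphi_m=0$ by \eqref{38}. For even $m,m'$ the Hamiltonian vector field \eqref{40} has vanishing $Y$- and $\upsilon^\ell$-components, so the coadjoint-orbit contribution in \eqref{12} drops out entirely and formula \eqref{3} collapses to $\langle[Y^{m'-1},Y^{m-1}],Y\rangle$, which is zero because powers of $Y$ commute with $Y$ itself. Applying \eqref{32} then yields $\{\varphi_{2p}^\red,\varphi_{2p'}^\red\}=0$ on $P_\red$ for every $p,p'\in\{1,\ldots,n\}$. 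Since $\varphi_{2p}^\red=(-1)^p\,\tfrac{1}{p}\sum_j\lambda_j^{2p}$ is independent of $\vartheta$, expanding the reduced bracket in the coordinates $(\lambda,\vartheta)$ on $\tilde S^0$ reduces this identity to
\begin{equation*}
\sum_{a,b=1}^n \lambda_a^{2p-1}\lambda_b^{2p'-1}\{\lambda_a,\lambda_b\}=0,
\qquad p,p'\in\{1,\ldots,n\}.
\end{equation*}
On $C_2$ one has $\lambda_1>\cdots>\lambda_n>|\kappa|\geq 0$, so the squares $\lambda_a^2$ are pairwise distinct and positive. The matrix $V_{pa}=\lambda_a^{2p-1}=\lambda_a(\lambda_a^2)^{p-1}$ therefore factors as a Vandermonde in the $\lambda_a^2$ times $\diag(\lambda_a)$ and is invertible, forcing the antisymmetric $\{\lambda_a,\lambda_b\}$ to vanish identically.

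I do not expect this particular lemma to present a genuine obstacle: both ingredients --- the vanishing of $\Omega(\bX_{\varphi_{m'}},\bX_{\varphi_m})$ on the unreduced side, and the Vandermonde argument on the reduced side --- are essentially automatic once the $\varphi_m$-family has been chosen. The pure trace functions are perfectly suited here precisely because their Hamiltonian vector fields touch only the group factor $y$, so no trace of $Y$ or $\upsilon^\ell$ enters the pairing. I anticipate that the real work will begin with the subsequent lemmas on $\{\lambda_a,\vartheta_b\}$ and $\{\vartheta_a,\vartheta_b\}$, where the $\varphi_m$ must be paired with the more intricate $\chi_k$ of \eqref{34} and all three pieces of \eqref{3}, together with the orbit term of \eqref{12}, will contribute non-trivially.
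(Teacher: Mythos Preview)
Your argument is correct and essentially identical to the paper's: both establish $\{\varphi_m,\varphi_l\}\equiv 0$ on $P$ from \eqref{38}--\eqref{40}, reduce via \eqref{32} and \eqref{35} to $\sum_{a,b}\lambda_a^{2p-1}\lambda_b^{2p'-1}\{\lambda_a,\lambda_b\}=0$, and invoke invertibility of the Vandermonde-type matrix $(\lambda_a^{2b-1})_{a,b}$. The only differences are cosmetic --- you spell out the vanishing of $\Omega(\bX_{\varphi_{m'}},\bX_{\varphi_m})$ through the commutator term in \eqref{3}, whereas the paper simply reads it off from $(\rd\varphi_m)_x(\bX_{\varphi_l})=\langle Y^{m-1},0\rangle=0$ --- and your closing commentary on Lemmas~\ref{lemma:2}--\ref{lemma:3} is accurate.
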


\begin{proof}
Using \eqref{38} one has $\{\varphi_m,\varphi_l\}\equiv 0$ for any $m,l\in\N$
which implies that $\{\varphi_m^\red,\varphi_l^\red\}\equiv 0$. Let $m,l\in\N$ be
arbitrary even numbers. Direct calculation of the Poisson bracket
$\{\varphi_m^\red,\varphi_l^\red\}$ using \eqref{35} and the Leibniz rule results in the
formula
\begin{equation}
\{\varphi_m^\red,\varphi_l^\red\}=(-1)^{\tfrac{m+l}{2}}4\sum_{a,b=1}^n
\lambda_a^{m-1}\{\lambda_a,\lambda_b\}\lambda_b^{l-1}.
\label{45}
\end{equation}
By introducing the $n\times n$ matrices
\begin{equation}
\bP_{a,b}:=\{\lambda_a,\lambda_b\}\quad
\text{and}\quad
\bU_{a,b}:=\lambda_a^{2b-1},\qquad
a,b\in\{1,\ldots,n\}
\label{46}
\end{equation}
and choosing $m$ and $l$ from the set $\{1,\ldots,2n\}$, the equation
$\{\varphi_m^\red,\varphi_l^\red\}\equiv 0$ can be cast into the matrix equation
\begin{equation}
(-1)^{\tfrac{m+l}{2}}\bU^\dag\bP\bU=\0_n.
\label{47}
\end{equation}
Since $\bU$ is an invertible Vandermonde-type matrix it follows from \eqref{47}
that $\bP=\0_n$ which reads as $\{\lambda_a,\lambda_b\}=0$ for all $a,b\in\{1,\ldots,n\}$.
\end{proof}

\begin{lemma}
$\{\lambda_a,\vartheta_b\}=\delta_{a,b}$ for any $a,b\in\{1,\ldots,n\}$.
\label{lemma:2}
\end{lemma}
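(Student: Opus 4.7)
The plan is to pair the trace functions $\varphi_m$ with $\chi_k$ and mimic the Vandermonde argument of Lemma~\ref{lemma:1}: by \eqref{32} the bracket $\{\varphi_m, \chi_k\}$ computed on $P$ equals $\{\varphi_m^\red, \chi_k^\red\}$ computed from \eqref{35}--\eqref{36}, and matching the two expressions yields a linear system for $\{\lambda_a, \vartheta_b\}$.

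On $P$, evaluating $\Omega(\bX_{\chi_k}, \bX_{\varphi_m})$ via \eqref{12} and \eqref{40}--\eqref{44} is simple because $\bX_{\varphi_m}$ has only a $\Delta y = yY^{m-1}$-component: two of the three summands in \eqref{12} drop out and the bracket collapses to $-\tfrac{1}{2}\langle Y^{m-1}, [[Y^k, y^{-1}Z(\upsilon^\ell)y]_\pm, C]\rangle$. Expanding the traces and using the reality relation $\overline{\tr(Y^p A Y^q C)} = (-1)^{p+q}\tr(Y^q A Y^p C)$ (with $A := y^{-1}Z(\upsilon^\ell)y$) recasts this, for $m$ even, as $-\chi_{m+k-1} + \Re\tr(Y^k A Y^{m-1} C)$; the $m$-odd case vanishes automatically since $\bX_{\varphi_m}=0$ there.

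The next step is to simplify the stray trace on the constraint surface. From \eqref{14}, $Y_+ = -\ri\kappa C$ on $J^{-1}(0)$, so a short induction yields $Y^{2j}C = CY^{2j}$ and $Y^{2j+1}C + CY^{2j+1} = -2\ri\kappa Y^{2j}$. These identities rewrite $\tr(Y^k A Y^{m-1} C)$ as $-\tr(Y^{m+k-1} A C) - 2\ri\kappa \tr(Y^{m+k-2} A)$, and the cross-section substitution $Y = \ri h(\lambda)\Lambda h^{-1}$, $A = hFF^\dag h^{-1}$ turns the latter into an explicit real multiple of $\sum_a \lambda_a^{m+k-2}(|F_a|^2 - |F_{n+a}|^2)$. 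The crucial observation is that this piece cancels precisely the $\kappa$-term in the even-index formula of $\chi_{m+k-1}^\red$ in \eqref{36}, leaving the clean identity (for $k$ odd)
\[
\{\varphi_m, \chi_k\}\big|_{\tilde S^0} = 4(-1)^{(m+k+1)/2}\sum_{a=1}^n \lambda_a^{m+k-1}\bigl[1-\tfrac{\kappa^2}{\lambda_a^2}\bigr]^{1/2}|X_a|\cos\vartheta_a.
\]

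On the other hand, expanding $\{\varphi_m^\red, \chi_k^\red\}$ by Leibniz with \eqref{35}--\eqref{36} and using Lemma~\ref{lemma:1} to annihilate all $\{\lambda, \lambda\}$-contributions delivers
\[
\{\varphi_m^\red, \chi_k^\red\} = 4(-1)^{(m+k+1)/2}\sum_{a,b=1}^n \lambda_a^{m-1}\lambda_b^k\bigl[1-\tfrac{\kappa^2}{\lambda_b^2}\bigr]^{1/2}|X_b|\cos\vartheta_b\,\{\lambda_a, \vartheta_b\}.
\]
Matching the two expressions and using functional independence of $\cos\vartheta_1, \ldots, \cos\vartheta_n$ on $\T^n$ reads off, for each $b$, the system $\sum_a \lambda_a^{m-1}\{\lambda_a, \vartheta_b\} = \lambda_b^{m-1}$ with $m \in \{2, 4, \ldots, 2n\}$. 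The coefficient matrix $[\lambda_a^{2c-1}]_{a,c=1}^n$ is of Vandermonde type in the distinct positive numbers $\lambda_a^2$, hence invertible on $C_2$, and forces $\{\lambda_a, \vartheta_b\} = \delta_{a,b}$. The most delicate step will be the $\kappa$-cancellation after imposing the momentum constraint, which is where the reduction geometry bears the entire load of the argument.
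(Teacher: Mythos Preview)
Your argument follows the paper's strategy exactly---pair $\varphi_m$ with $\chi_k$, compute the bracket both upstairs on $P$ and directly from the reduced formulae, and finish with a Vandermonde inversion---but with the parity of $k$ flipped. The paper takes $k$ \emph{even}, so that on the cross-section $Y^kC=CY^k$ holds immediately and the stray trace in \eqref{48} reduces to a second copy of $\chi_{k+m-1}^\red$, yielding the clean identity $\{\chi_k^\red,\varphi_m^\red\}=2\chi_{k+m-1}^\red$ with $\sin\vartheta_b$ factors. Your choice of $k$ odd forces you through the anticommutator relation $Y^{m-1}C+CY^{m-1}=-2\ri\kappa Y^{m-2}$ and the ensuing $\kappa$-cancellation against the second term of the even-index $\chi^\red$ in \eqref{36}; this is correct but is strictly more work than the paper's route, which never touches that term.

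One step of yours needs tightening: the appeal to ``functional independence of $\cos\vartheta_1,\ldots,\cos\vartheta_n$'' is not quite legitimate, because the coefficients in your matched identity involve $\{\lambda_a,\vartheta_b\}$, which a priori may depend on $\vartheta$ as well as $\lambda$. The fix is the same as the paper's: first run a Vandermonde argument in the exponent $k$ (over odd $k\in\{1,3,\ldots,2n-1\}$) to isolate the $b$-th summand, then divide by $\cos\vartheta_b$ on the dense open set where it is nonzero, carry out the second Vandermonde in $m$, and extend by continuity.
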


\begin{proof}
By choosing two even numbers, $k$ and $m$, and calculating the Poisson bracket
$\{\chi_k,\varphi_m\}$ at an arbitrary point $x=(y,Y,\upsilon^\ell,\upsilon^r)\in P$
the results \eqref{40}-\eqref{44} imply that
\begin{equation}
\{\chi_k,\varphi_m\}(x)=\chi_{k+m-1}(x)
+\frac{1}{2}\tr\big((Y^kCY^{m-1}-Y^{m-1}CY^k)y^{-1}Z(\upsilon^\ell)y\big).
\label{48}
\end{equation}
The computation of the reduced form of \eqref{48} shows that
\begin{equation}
\{\chi_k^\red,\varphi_m^\red\}=2\chi_{k+m-1}^\red.
\label{49}
\end{equation}
By utilizing \eqref{35}, \eqref{36} and the result of the previous lemma one can write
the l.h.s. of \eqref{49} as
\begin{equation}
\{\chi_k^\red,\varphi_m^\red\}=(-1)^{\tfrac{k+m}{2}}
4\sum_{b=1}^n\lambda_b^k\bigg[1-\frac{\kappa^2}{\lambda_b^2}\bigg]^{\tfrac{1}{2}}
|X_b(\lambda)|\sin(\vartheta_b)\sum_{a=1}^n\{\lambda_a,\vartheta_b\}\lambda_a^{m-1}.
\label{50}
\end{equation}
Now, returning to equation \eqref{49} together with \eqref{50}
one can obtain the following equivalent form
\begin{equation}
\sum_{b=1}^n
\lambda_b^k\bigg[1-\frac{\kappa^2}{\lambda_b^2}\bigg]^{\tfrac{1}{2}}
|X_b(\lambda)|\sin(\vartheta_b)
\bigg(\sum_{a=1}^n\{\lambda_a,\vartheta_b\}\lambda_a^{m-1}-\lambda_b^{m-1}\bigg)
=0.
\label{51}
\end{equation}
By introducing the $n\times n$ matrices
\begin{equation}
\bV_{b,d}:=\bigg[1-\frac{\kappa^2}{\lambda_b^2}\bigg]^{\tfrac{1}{2}}
|X_b(\lambda)|\sin(\vartheta_b)\bigg(\sum_{a=1}^n\{\lambda_a,\vartheta_b\}
\lambda_a^{2d-1}-\lambda_b^{2d-1}\bigg),
\quad b,d\in\{1,\ldots,n\}
\label{52}
\end{equation}
and using the Vandermonde-type matrix $\bU$ defined in \eqref{46}
one is able to write \eqref{51} into the matrix equation $\bU^\dag\bV=\0_n$.
Since $\bU$ is invertible $\bV=\0_n$ and therefore in the dense subset of
$C_2\times\T^n$ where $\sin(\vartheta_b)\neq 0$ the following holds
\begin{equation}
\sum_{a=1}^n\{\lambda_a,\vartheta_b\}\lambda_a^{m-1}-\lambda_b^{m-1}=0,
\quad\forall\,b\in\{1,\ldots,n\}.
\label{53}
\end{equation}
With the matrices $\bU$ and
\begin{equation}
\bQ_{b,a}:=\{\lambda_a,\vartheta_b\},\quad
a,b\in\{1,\ldots,n\}
\label{54}
\end{equation}
equation \eqref{53} can be written equivalently as $\bQ\bU-\bU=\0_n$,
which immediately implies that $\bQ=\1_n$. Due to the continuity of Poisson bracket
$\bQ=\1_n$ must hold for every point in $C_2\times\T^n$, therefore one has
$\{\lambda_a,\vartheta_b\}=\delta_{a,b}$ for all $a,b\in\{1,\ldots,n\}$.
\end{proof}

\begin{lemma}
$\{\vartheta_a,\vartheta_b\}=0$ for any $a,b\in\{1,\ldots,n\}$.
\label{lemma:3}
\end{lemma}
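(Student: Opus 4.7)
The natural move, following Lemmas \ref{lemma:1} and \ref{lemma:2}, is to study $\{\chi_k,\chi_l\}$ for the $\chi$-family from \eqref{34}: both $\chi_k^\red$ and $\chi_l^\red$ depend genuinely on $\vartheta$ via \eqref{36}, so the reduced bracket will isolate the still-unknown quantities $\{\vartheta_a,\vartheta_b\}$. The plan is to compute this bracket in two ways and compare them through the reduction identity \eqref{32}, then peel off the unknowns by a Vandermonde argument.

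Upstairs, I would evaluate $\Omega_x(\bX_{\chi_l},\bX_{\chi_k})$ at arbitrary $x\in P$ via \eqref{12}, inserting the Hamiltonian vector-field components \eqref{41}--\eqref{44}. Four contributions appear: the two $\langle y^{-1}\Delta y,\Delta' Y\rangle$-type pairings and the commutator term $\langle[y^{-1}\Delta_l y,y^{-1}\Delta_k y],Y\rangle$ from \eqref{3}, together with the coadjoint-orbit piece $\langle[D_l,D_k],\upsilon^\ell\rangle$. Ad-invariance of $\langle\cdot,\cdot\rangle$, together with the structural identities $Z(\upsilon^\ell)=VV^\dagger$, $CZ(\upsilon^\ell)C=Z(\upsilon^\ell)$ (from $CV+V=0$), $\Gamma(Y)=-Y$, and $\Gamma(C)=C$, should consolidate these four pieces into a closed trace expression in $Y$, $C$ and $y^{-1}Z(\upsilon^\ell)y$ which, modulo the momentum constraints $(yYy^{-1})_+=-\upsilon^\ell$ and $Y_+=\upsilon^r$, reduces on $\tilde S^0$ to an explicit function of $(\lambda,\vartheta)$.

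Downstairs, Leibniz combined with Lemmas \ref{lemma:1} and \ref{lemma:2} gives
\begin{equation*}
\{\chi_k^\red,\chi_l^\red\}=\sum_{a=1}^n\bigl(\partial_{\lambda_a}\chi_k^\red\,\partial_{\vartheta_a}\chi_l^\red-\partial_{\vartheta_a}\chi_k^\red\,\partial_{\lambda_a}\chi_l^\red\bigr)+\sum_{a,b=1}^n\{\vartheta_a,\vartheta_b\}\,\partial_{\vartheta_a}\chi_k^\red\,\partial_{\vartheta_b}\chi_l^\red,
\end{equation*}
with every derivative computed from \eqref{36}. Restricting to $k,l$ both odd so that $\partial_{\vartheta_a}\chi_k^\red$ is proportional to $\lambda_a^k[1-\kappa^2/\lambda_a^2]^{1/2}|X_a|\cos\vartheta_a$, the only unknowns are the antisymmetric quantities $\{\vartheta_a,\vartheta_b\}$, appearing bilinearly with Vandermonde-type coefficients $\lambda_a^k\lambda_b^l$. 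Equating the two expressions via \eqref{32} and letting $k,l$ range over the $n$ odd integers in $\{1,3,\ldots,2n-1\}$, I expect a matrix identity of the shape $\bU^\dagger\bW\bU=\0_n$, where $\bU$ is the invertible Vandermonde-type matrix of \eqref{46} and $\bW$ encodes the brackets $\{\vartheta_a,\vartheta_b\}$ weighted by the non-vanishing factors $[1-\kappa^2/\lambda_a^2]^{1/2}|X_a|\cos\vartheta_a$. Invertibility of $\bU$ forces $\bW=\0_n$, giving $\{\vartheta_a,\vartheta_b\}=0$ on the dense open subset where $\cos\vartheta_a\neq 0$, and continuity of the Poisson bracket extends the vanishing to all of $C_2\times\T^n$.

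The hard part will be the upstairs computation. Organising the four symplectic pieces into a recognisable closed form requires careful tracking of the signs $(-1)^k,(-1)^l$ buried in the anticommutators $[\cdot,C]_\pm$, repeated use of Ad-invariance and $\Gamma$-equivariance to collapse traces, and judicious application of the momentum constraints to discard contributions that are not expressible in terms of the $\chi$- and $\varphi$-families; the consistency check is that the resulting function on $\tilde S^0$ must coincide with the first (antisymmetric) sum in the Leibniz expansion above, since that is precisely what the desired identity $\{\vartheta_a,\vartheta_b\}=0$ predicts. This step is the trigonometric BC${}_n$ counterpart of the analogous calculation in Pusztai's hyperbolic argument \cite{PUSZTAI1}, modified by the specific cross-section \eqref{25}, and the Discussion section will spell out this parallel.
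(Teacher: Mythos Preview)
Your proposal is correct and follows essentially the same route as the paper's proof: compute $\{\chi_k,\chi_l\}$ for odd $k,l$ both upstairs via \eqref{12} and downstairs via Leibniz with Lemmas \ref{lemma:1}--\ref{lemma:2}, compare to isolate the $\{\vartheta_a,\vartheta_b\}$-term, and kill it by a Vandermonde argument together with the density/continuity step on $\{\cos\vartheta_a\neq 0\}$. The paper packages the Vandermonde step slightly differently (defining $\bW_{a,b}=\lambda_a^b[1-\kappa^2/\lambda_a^2]^{1/2}|X_a|\cos\vartheta_a$ and $\bR_{a,b}=\{\vartheta_a,\vartheta_b\}$, then writing $\bW^\dagger\bR\,\bW=\0_n$), and it records that the commutator term $\langle[y^{-1}\Delta y,y^{-1}\Delta' y],Y\rangle$ vanishes, but otherwise the argument is the one you outlined.
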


\begin{proof}
Let $k$ and $l$ be two arbitrarily chosen odd integers, and set $f=\chi_l$ and $f'=\chi_k$
in \eqref{12}. First, one can calculate the Poisson bracket $\{\chi_k^\red,\chi_l^\red\}$
indirectly, that is, work out the Poisson bracket
$\{\chi_k,\chi_l\}=\Omega(\bX_{\chi_l},\bX_{\chi_k})$
explicitly and restrict it to the gauge \eqref{27}. The first term on the right-hand side
of equation \eqref{12}, namely $\langle y^{-1}\Delta y,\Delta'Y\rangle$ can be written as
\begin{equation}
\begin{split}
\langle y^{-1}\Delta y,\Delta'Y\rangle=&
(-1)^{\tfrac{k+l+2}{2}} 2\,l\sum_{a=1}^n
\lambda_a^{k+l-1}\bigg[1-\frac{\kappa}{\lambda_a^2}\bigg]
|X_a(\lambda)|^2\sin(2\vartheta_a)\\
&(-1)^{\tfrac{k+l+2}{2}} 2\sum_{\substack{a,b=1\\(a\neq b)}}^n
\lambda_a^k\lambda_b^l
\bigg[1-\frac{\kappa^2}{\lambda_a^2}\bigg]^{\tfrac{1}{2}}
\bigg[1-\frac{\kappa^2}{\lambda_b^2}\bigg]^{\tfrac{1}{2}}
|X_a||X_b|
\frac{\sin(\vartheta_a-\vartheta_b)}
{\lambda_a+\lambda_b}\\
&(-1)^{\tfrac{k-l+2}{2}} 2\sum_{\substack{a,b=1\\(a\neq b)}}^n
\lambda_a^k\lambda_b^l
\bigg[1-\frac{\kappa^2}{\lambda_a^2}\bigg]^{\tfrac{1}{2}}
\bigg[1-\frac{\kappa^2}{\lambda_b^2}\bigg]^{\tfrac{1}{2}}
|X_a||X_b|
\frac{\sin(\vartheta_a+\vartheta_b)}
{\lambda_a-\lambda_b}.
\end{split}
\label{55}
\end{equation}
Due to antisymmetry in the indices the second term can be gained
by interchanging $k$ and $l$
\begin{equation}
\begin{split}
\langle y^{-1}\Delta' y,\Delta Y\rangle=&
(-1)^{\tfrac{k+l+2}{2}} 2\,k\sum_{a=1}^n
\lambda_a^{k+l-1}\bigg[1-\frac{\kappa}{\lambda_a^2}\bigg]
|X_a(\lambda)|^2\sin(2\vartheta_a)\\
&(-1)^{\tfrac{k-l+2}{2}} 2\sum_{\substack{a,b=1\\(a\neq b)}}^n
\lambda_a^k\lambda_b^l
\bigg[1-\frac{\kappa^2}{\lambda_a^2}\bigg]^{\tfrac{1}{2}}
\bigg[1-\frac{\kappa^2}{\lambda_b^2}\bigg]^{\tfrac{1}{2}}
|X_a||X_b|
\frac{\sin(\vartheta_a-\vartheta_b)}
{\lambda_a+\lambda_b}\\
&(-1)^{\tfrac{k+l+2}{2}} 2\sum_{\substack{a,b=1\\(a\neq b)}}^n
\lambda_a^k\lambda_b^l
\bigg[1-\frac{\kappa^2}{\lambda_a^2}\bigg]^{\tfrac{1}{2}}
\bigg[1-\frac{\kappa^2}{\lambda_b^2}\bigg]^{\tfrac{1}{2}}
|X_a||X_b|
\frac{\sin(\vartheta_a+\vartheta_b)}
{\lambda_a-\lambda_b}.
\end{split}
\label{56}
\end{equation}
One can easily check that the third term in \eqref{12} vanishes.
The last term of \eqref{12} takes the form
\begin{equation}
\begin{split}
\langle[D_\upsilon,D'_\upsilon],\upsilon\rangle=
&(-1)^{\tfrac{k+l+2}{2}} 4\sum_{\substack{a,b=1\\(a\neq b)}}^n
\lambda_a^k\lambda_b^l
\bigg[1-\frac{\kappa^2}{\lambda_a^2}\bigg]^{\tfrac{1}{2}}
\bigg[1-\frac{\kappa^2}{\lambda_b^2}\bigg]^{\tfrac{1}{2}}
|X_a||X_b|
\frac{\sin(\vartheta_a-\vartheta_b)}
{\big(4\mu^2-(\lambda_a+\lambda_b)^2\big)(\lambda_a+\lambda_b)}\\
&(-1)^{\tfrac{k-l+2}{2}} 4\sum_{\substack{a,b=1\\(a\neq b)}}^n
\lambda_a^k\lambda_b^l
\bigg[1-\frac{\kappa^2}{\lambda_a^2}\bigg]^{\tfrac{1}{2}}
\bigg[1-\frac{\kappa^2}{\lambda_b^2}\bigg]^{\tfrac{1}{2}}
|X_a||X_b|
\frac{\sin(\vartheta_a+\vartheta_b)}
{\big(4\mu^2-(\lambda_a-\lambda_b)^2\big)(\lambda_a-\lambda_b)}.
\end{split}
\label{57}
\end{equation}
As a result of this indirect calculation one obtains the following expression for
$\{\chi_k^\red,\chi_l^\red\}$
\begin{equation}
\begin{split}
\{\chi_k^\red,\chi_l^\red\}&=
(-1)^{\tfrac{k-l+2}{2}} 2(k-l)\sum_{a=1}^n\lambda_a^{k+l-1}
\bigg[1-\frac{\kappa^2}{\lambda_a^2}\bigg]|X_a|^2\sin(2\vartheta_a)\\
&(-1)^{\tfrac{k+l+2}{2}} 16\mu^2\sum_{\substack{a,b=1\\(a\neq b)}}^n
\lambda_a^k\lambda_b^l
\bigg[1-\frac{\kappa^2}{\lambda_a^2}\bigg]^{\tfrac{1}{2}}
\bigg[1-\frac{\kappa^2}{\lambda_b^2}\bigg]^{\tfrac{1}{2}}
|X_a||X_b|
\frac{\sin(\vartheta_a-\vartheta_b)}
{\big(4\mu^2-(\lambda_a+\lambda_b)^2\big)(\lambda_a+\lambda_b)}\\
&(-1)^{\tfrac{k-l+2}{2}} 16\mu^2\sum_{\substack{a,b=1\\(a\neq b)}}^n
\lambda_a^k\lambda_b^l
\bigg[1-\frac{\kappa^2}{\lambda_a^2}\bigg]^{\tfrac{1}{2}}
\bigg[1-\frac{\kappa^2}{\lambda_b^2}\bigg]^{\tfrac{1}{2}}
|X_a||X_b|
\frac{\sin(\vartheta_a+\vartheta_b)}
{\big(4\mu^2-(\lambda_a-\lambda_b)^2\big)(\lambda_a-\lambda_b)}.
\end{split}
\label{58}
\end{equation}
One can also carry out a direct computation of $\{\chi_k^\red,\chi_l^\red\}$
by using basic properties of the Poisson bracket and the previous two lemmas
\begin{equation}
\begin{split}
\{\chi_k^\red,\chi_l^\red\}&=
(-1)^{\tfrac{k-l+2}{2}} 2(k-l)\sum_{a=1}^n\lambda_a^{k+l-1}
\bigg[1-\frac{\kappa^2}{\lambda_a^2}\bigg]|X_a|^2\sin(2\vartheta_a)\\
&(-1)^{\tfrac{k+l+2}{2}} 16\mu^2\sum_{\substack{a,b=1\\(a\neq b)}}^n
\lambda_a^k\lambda_b^l
\bigg[1-\frac{\kappa^2}{\lambda_a^2}\bigg]^{\tfrac{1}{2}}
\bigg[1-\frac{\kappa^2}{\lambda_b^2}\bigg]^{\tfrac{1}{2}}
|X_a||X_b|
\frac{\sin(\vartheta_a-\vartheta_b)}
{\big(4\mu^2-(\lambda_a+\lambda_b)^2\big)(\lambda_a+\lambda_b)}\\
&(-1)^{\tfrac{k-l+2}{2}} 16\mu^2\sum_{\substack{a,b=1\\(a\neq b)}}^n
\lambda_a^k\lambda_b^l
\bigg[1-\frac{\kappa^2}{\lambda_a^2}\bigg]^{\tfrac{1}{2}}
\bigg[1-\frac{\kappa^2}{\lambda_b^2}\bigg]^{\tfrac{1}{2}}
|X_a||X_b|
\frac{\sin(\vartheta_a+\vartheta_b)}
{\big(4\mu^2-(\lambda_a-\lambda_b)^2\big)(\lambda_a-\lambda_b)}\\
&(-1)^{\tfrac{k-l}{2}} 4\sum_{a,b=1}^n\lambda_a^k\lambda_b^l
\bigg[1-\frac{\kappa^2}{\lambda_a^2}\bigg]^{\tfrac{1}{2}}
\bigg[1-\frac{\kappa^2}{\lambda_b^2}\bigg]^{\tfrac{1}{2}}
|X_a||X_b|\cos(\vartheta_a)\cos(\vartheta_b)\{\vartheta_a,\vartheta_b\}.
\end{split}
\label{59}
\end{equation}
Now it is obvious that \eqref{58} and \eqref{59}
must be equal therefore the extra term must vanish
\begin{equation}
\sum_{a,b=1}^n\lambda_a^k\lambda_b^l
\bigg[1-\frac{\kappa^2}{\lambda_a^2}\bigg]^{\tfrac{1}{2}}
\bigg[1-\frac{\kappa^2}{\lambda_b^2}\bigg]^{\tfrac{1}{2}}
|X_a||X_b|\cos(\vartheta_a)\cos(\vartheta_b)\{\vartheta_a,\vartheta_b\}=0.
\label{60}
\end{equation}
By utilizing the $n\times n$ matrices
\begin{equation}
\bW_{a,b}=\lambda_a^b\bigg[1-\frac{\kappa^2}{\lambda_a^2}\bigg]^{\tfrac{1}{2}}
|X_a(\lambda)|\cos(\vartheta_a),\quad
\bR_{a,b}=\{\vartheta_a,\vartheta_b\},\quad
a,b\in\{1,\ldots,n\}
\label{61}
\end{equation}
one can reformulate \eqref{60} as the matrix equation
\begin{equation}
\bW^\dag\bR\,\bW=\0_n.
\label{62}
\end{equation}
Since $\bW$ is easily seen to be invertible in a dense subset of the phase space
$C_2\times\T^n$, eq. \eqref{62} and the continuity of Poisson bracket imply
$\bR=\0_n$ for the full phase space, i.e., $\{\vartheta_a,\vartheta_b\}=0$ for all
$a,b\in\{1,\ldots,n\}$.
\end{proof}

Lemmas \ref{lemma:1}, \ref{lemma:2}, and \ref{lemma:3} together imply the following
result of \cite{FEHER-GORBE}, whose proof was omitted in that paper to save space.

\begin{theorem}
The reduced symplectic structure on $\tilde S^0$ \eqref{27}, given by the
pull-back of $\Omega$ \eqref{11} by the map $\tilde\sigma_0$ \eqref{29},
has the canonical form
$\tilde\sigma_0^\ast(\Omega)=\sum_{a=1}^n\rd\lambda_a\wedge\rd\vartheta_a$.
\label{theorem:4}
\end{theorem}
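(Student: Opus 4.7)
My plan is to observe that Theorem~\ref{theorem:4} follows directly once Lemmas~\ref{lemma:1}, \ref{lemma:2}, and \ref{lemma:3} are in hand, so essentially no new analytical work remains. First I would invoke the identifications in \eqref{28} to regard $\tilde S^0\simeq C_2\times\T^n$ as an open submanifold of the reduced symplectic manifold $P_\red$ of dimension $2n$. Since $\tilde S^0\subset J^{-1}(0)$ is a cross-section of the $(G_+\times G_+)$-action, the standard theory of symplectic reduction guarantees that $\tilde\sigma_0^\ast(\Omega)$ agrees with the pullback of the reduced symplectic form, hence is a genuine (closed, non-degenerate) symplectic form on $\tilde S^0$, and $(\lambda_1,\ldots,\lambda_n,\vartheta_1,\ldots,\vartheta_n)$ serve as global coordinates on it.

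Next I would apply formula \eqref{32} together with the three lemmas to conclude that, in this chart, the Poisson brackets induced by $\tilde\sigma_0^\ast(\Omega)$ are canonical: $\{\lambda_a,\lambda_b\}=\{\vartheta_a,\vartheta_b\}=0$ and $\{\lambda_a,\vartheta_b\}=\delta_{a,b}$ for all $a,b\in\{1,\ldots,n\}$. The matrix of the Poisson bivector in the chart $(\lambda,\vartheta)$ is therefore the standard block-antidiagonal one, and its negative inverse—which by definition is the coefficient matrix of $\tilde\sigma_0^\ast(\Omega)$—is also standard. Reading the result off as a $2$-form yields precisely the desired equality $\tilde\sigma_0^\ast(\Omega)=\sum_{a=1}^n\rd\lambda_a\wedge\rd\vartheta_a$. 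Equivalently, one can phrase this final step as the observation that any symplectic form on a $2n$-manifold whose fundamental Poisson brackets in some chart coincide with those of the canonical Darboux form must itself be the canonical Darboux form in that chart.

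The main obstacle in this programme is not the final assembly—which is purely linear-algebraic—but the three preceding lemmas, all of which are already established in Section~\ref{sec:3}. In particular, the hardest step was the vanishing of $\{\vartheta_a,\vartheta_b\}$ in Lemma~\ref{lemma:3}, which required the indirect evaluation of $\{\chi_k,\chi_l\}$ via the fundamental symplectic formula \eqref{12} and its term-by-term matching against the direct expansion obtained using Lemmas~\ref{lemma:1} and \ref{lemma:2}, followed by a Vandermonde-type invertibility argument. Granting those three canonical relations, Theorem~\ref{theorem:4} emerges as a one-line corollary with no further computation.
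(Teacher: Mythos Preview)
Your proposal is correct and matches the paper's approach exactly: the paper also presents Theorem~\ref{theorem:4} as an immediate consequence of Lemmas~\ref{lemma:1}, \ref{lemma:2}, and \ref{lemma:3}, with no additional argument beyond the observation that the three canonical Poisson-bracket relations together determine the symplectic form. Your added explanation of the linear-algebraic passage from the canonical Poisson bivector to the Darboux form is a helpful clarification, but it does not constitute a different route.
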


\section{Discussion}
\label{sec:4}

In this paper an explicit derivation of the Darboux form \eqref{30} was given.
The Poisson bracket relations
\begin{equation}
\{\lambda_a,\lambda_b\}=0,\quad
\{\lambda_a,\vartheta_b\}=\delta_{a,b},\quad
\{\vartheta_a,\vartheta_b\}=0,\qquad
a,b\in\{1,\ldots,n\}
\label{63}
\end{equation}
were proved in Lemmas \ref{lemma:1}, \ref{lemma:2}, and \ref{lemma:3}, respectively.
As a consequence Theorem \ref{theorem:4} was proved.

As mentioned before the method used in this paper has been previously applied to the
analogous hyperbolic models associated with the C${}_n$ \cite{PUSZTAI1} and A${}_n$
\cite{FEHER-KLIMCIK,AYADI-FEHER-GORBE} root systems. In \cite{PUSZTAI2} the hyperbolic
BC${}_n$ case has been settled by ``an almost verbatim computation as in the C${}_n$
case''. In fact, a careful comparison of corresponding equations shows subtle differences
as a result of the dissimilar characteristics of the underlying systems. For example,
most of the expressions in Section \ref{sec:3} contain factors with the parameter $\kappa$
which reflects the BC${}_n$ feature. As one would expect taking the limit $\kappa\to0$
turns these formulae into the ones seen in the C${}_n$ case. The trigonometric nature of
the considered systems can be accounted for another difference when minor complications
occur in Lemmas \ref{lemma:2} and \ref{lemma:3} due to the appearance of trigonometric
functions. These issues have been resolved by using density and continuity arguments.

\bigskip
\noindent{\bf Acknowledgements.}

The author is grateful to L\'aszl\'o Feh\'er for his valuable suggestions.
This work was supported in part by the EU and the State of Hungary,
co-financed by the European Social Fund in the framework of
T\'AMOP-4.2.4.A/2-11/1-2012-0001 `National Excellence Program' and by the
Hungarian Scientific Research Fund (OTKA) under the grant K-111697.

\newpage


\begin{thebibliography}{10}

\bibitem{RUIJSENAARS1}
S.N.M. Ruijsenaars,
\newblock {\em Systems of Calogero-Moser type},
\newblock {In G.~Semenoff and L.~Vinet (eds.), {\em Particles and Fields}
(Springer, New York, 1999), pp. 251-352}

\bibitem{ETINGOF}
P.I. Etingof,
\newblock {\em {\em Calogero-Moser systems and representation theory}},
\newblock {(European Mathematical Society, Z\"{u}rich, 2007)}

\bibitem{POLYCHRONAKOS}
A.P. Polychronakos,
\newblock {\em Physics and Mathematics of Calogero particles},
\newblock {J. Phys. A {\bf 39}, 12793 (2006)}

\bibitem{RUIJSENAARS2}
S.N.M. Ruijsenaars,
\newblock {\em Action-angle maps and scattering theory for some
  finite-dimensional integrable systems. I. The pure soliton case},
\newblock {Commun. Math. Phys. {\bf 115}, 127-165 (1988)}

\bibitem{RUIJSENAARS3}
S.N.M. Ruijsenaars,
\newblock {\em Action-angle maps and scattering theory for some
  finite-dimensional integrable systems. III. Sutherland type systems and their
  duals},
\newblock {Publ. RIMS Kyoto Univ. {\bf 31}, 247-353 (1995)}

\bibitem{FOCK-GORSKY-NEKRASOV-RUBTSOV}
V.~Fock, A.~Gorsky, N.~Nekrasov, and V.~Rubtsov,
\newblock {\em Duality in integrable systems and gauge theories},
\newblock {JHEP {\bf 07}, 028 (2000)}

\bibitem{FEHER-KLIMCIK}
L.~Feh\'er and C.~Klim{\v c}\'ik,
\newblock {\em On the duality between the hyperbolic Sutherland and the
  rational Ruijsenaars-Schneider models},
\newblock {J. Phys. A {\bf 42}, 185202 (2009)}

\bibitem{PUSZTAI2}
B.G. Pusztai,
\newblock {\em The hyperbolic $BC(n)$ Sutherland and the rational $BC(n)$
  Ruijsenaars-Schneider-van Diejen models: Lax matrices and duality},
\newblock {Nucl. Phys. B {\bf 856}, 528-551 (2012)}

\bibitem{OLSHANETSKY-PERELOMOV}
M.A. Olshanetsky and A.M. Perelomov,
\newblock {\em Completely integrable Hamiltonian systems connected with
  semisimple Lie algebras},
\newblock {Invent. Math. {\bf 37}, 93-108 (1976)}

\bibitem{VANDIEJEN}
J.F. van Diejen,
\newblock {\em Deformations of Calogero-Moser systems and finite Toda chains},
\newblock {Theor. Math. Phys. {\bf 99}, 549-554 (1994)}

\bibitem{FEHER-GORBE}
L.~Feh\'er and T.F. G\"{o}rbe,
\newblock {\em Duality between the trigonometric {BC${}_\textit{n}$} Sutherland
  system and a completed rational Ruijsenaars\,--\,Schneider\,--\,van Diejen
  system},
\newblock {to appear in J. Math. Phys. (2014);}
\newblock {\href{http://arxiv.org/abs/1407.2057}{\tt arXiv:1407.2057~[math-ph]}}

\bibitem{FEHER-AYADI}
L.~Feh\'er and V.~Ayadi,
\newblock {\em Trigonometric Sutherland systems and their Ruijsenaars duals
  from symplectic reduction},
\newblock {J. Math. Phys. {\bf 51}, 103511 (2010)}

\bibitem{ORTEGA-RATIU}
J.P. Ortega and T.S. Ratiu,
\newblock {{\em Momentum maps and Hamiltonian reduction}},
\newblock {Progress in Mathematics Vol. 222 (Birkh\"{a}user, Basel, 2004)}

\bibitem{PUSZTAI1}
B.G. Pusztai,
\newblock {\em Action-angle duality between the $C(n)$-type hyperbolic
  Sutherland and the rational Ruijsenaars-Schneider-van Diejen models},
\newblock {Nucl. Phys. B {\bf 853}, 139-173 (2011)}

\bibitem{AYADI-FEHER-GORBE}
V.~Ayadi, L.~Feh\'er, and T.F. G\"{o}rbe,
\newblock {\em Superintegrability of rational Ruijsenaars-Schneider systems and
  their action-angle duals},
\newblock {J. Geom. Symmetry Phys. {\bf 27}, 27-44 (2012)}

\end{thebibliography}
\end{document}